\newtheorem{theorem}{Theorem}[section]
\theoremstyle{definition}
\theoremstyle{remark}
\newtheorem{proposition}[theorem]{Proposition}
\numberwithin{equation}{section}
\numberwithin{equation}{section}
\numberwithin{equation}{section}
\def\bW{{\mathbf W}}
\def\bA{{\mathbf A}}  \def\bD{{\mathbf D}}
 \def\bW{{\mathbf W}} \def\bX{{\mathbf X}} 
\def\bM{\boldsymbol M}
\def\bx{{\mathbf x}} \def\by{{\mathbf y}} 
\def\bbeta{{\boldsymbol{\beta}}}
 \def\bdelta{{\boldsymbol{\delta}}}
 \def\bGamma{{\boldsymbol{\Gamma}}}
  \def\bxi{{\boldsymbol{\xi}}}
\def\bSigma{{\boldsymbol{\Sigma}}}
 \def\tbbeta{\widetilde{\boldsymbol \beta}}
\def\hbbeta{\widehat{\boldsymbol \beta}}
\def\sgn{\mbox{sgn}}
\def\real{\mathop{{\rm I}\kern-.2em\hbox{\rm R}}\nolimits}
\def\1overn{\frac{1}{n}}
\def\bel{\begin{eqnarray}\label}  \def\eel{\end{eqnarray}}
\def\bes{\begin{eqnarray*}}  \def\ees{\end{eqnarray*}}
\begin{document}

\title{Improving Estimations in Quantile Regression Model with 
Autoregressive Errors}


\author{Bahad{\i}r Y\"{u}zba\c{s}{\i}$^1$, Yasin Asar$^2$, M.\c{S}amil \c{S}{\i}k$^3$ \and Ahmet Demiralp$^4$ }

\date{\today}
\maketitle

{\footnotesize
\center { \text{  $^{1,3,4}$ Department of Econometrics}\par
  { \text{ Inonu University}}\par
  {\text{ Malatya 44280, Turkey}}\par
  { \texttt{E-mail address: $^1$b.yzb@hotmail.com and $^3$mhmd.sml85@gmail.com and $^4$ahmt.dmrlp@gmail.com}}\par

  \vskip 0.2 cm

  \text{  $^2$ Department of Mathematics-computer Sciences
}\par
  { \text{ Necmettin Erbakan University}}\par
  {\text{ Konya 42090, Turkey}}\par
  { \texttt{E-mail address: yasar@konya.edu.tr, yasinasar@hotmail.com}}

}}


\renewcommand\leftmark {\centerline{  \rm Quantile Shrinkage Estimation}}
\renewcommand\rightmark {\centerline{ \rm  Quantile Shrinkage Estimation}}

\renewcommand{\thefootnote}{}
\footnote{2010  {\it AMS Mathematics Subject Classification:}
62J05, 62J07. }

\footnote {Key words and phrases:  Preliminary Estimation; Stein-Type Estimation; Autocorrelation; Quantile Regression.\par

Corresponding author : Bahad{\i}r Y\"{u}zba\c{s}{\i} \par

}

\begin{abstract}

An important issue is that the respiratory mortality may be a result of air pollution which can be measured by the following variables: temperature, relative humidity, carbon monoxide, sulfur dioxide, nitrogen dioxide, hydrocarbons, ozone and particulates. The usual way is to fit a model using the ordinary least squares regression, which has some assumptions, also known as Gauss-Markov assumptions, on the error term showing white noise process of the regression model. However, in many applications, especially for this example, these assumptions are not satisfied. Therefore, in this study, a quantile regression approach is used to model the respiratory mortality using the mentioned explanatory variables. Moreover, improved estimation techniques such as preliminary testing and shrinkage strategies are also obtained when the errors are autoregressive. A Monte Carlo simulation experiment, including the quantile penalty estimators such as Lasso, Ridge and Elastic Net, is designed to evaluate the performances of the proposed techniques. Finally, the theoretical risks of the listed estimators are given.

\end{abstract}

\maketitle

\section{Introduction}
Regression analysis is a statistical technique that  is used to model the cumulative and linear relationship between covariates and response variables. The most common method used for this purpose is the ordinary least squares (OLS) method. The linear regression model can be written as follows:
\begin{equation}
y_i=\beta_0+\sum_{j=1}^{p}\beta_jx_{ij} +\varepsilon_i, \\ \\
i=1,2,\dots,n,\label{lin.mod}
\end{equation}%
where $y_i$'s are the response variables, $\beta _{j}$'s are unknown regression coefficients, $x_{ij}$'s are known covariates and  $\varepsilon_{i}$'s are unobservable random errors. When estimating the parameters using OLS method, the expectation of the dependent variable conditional on the independent variables is obtained. In other words, the relationship between the explanatory and explained variables in the coordinate plane is estimated with a mean regression line.


 In order to use OLS estimator, there are three assumptions on the error terms showing white noise process of the regression model: (1) The error terms have zero mean, (2) The variance of the error terms is constant and (3) The covariance between the errors is zero i.e., there is no autocorrelation problem. In real life most of the data doesn't provide these assumptions. Moreover, OLS provides a view of the relationship between covariates and response variable such that it models the expectation of the response conditional on the covariates without taking into account the outliers. To overcome these inadequacies of the classical regression \cite{qr1978} have proposed the quantile regression as an expansion of the classical regression model to a basic minimization problem which generates sample quantiles. For a random variable $Y$  with distribution function $\mathcal{F}_{Y}\left(\by\right)=P( Y\le\by ) =\tau$ and
$0\le\tau\le 1$, the $\tau^{th}$ quantile function of $Y$, $\mathcal{Q}_{\tau}(\by)$, is defined to be
\begin{equation}\label{eq:y_tau}
\mathcal{Q}_{\tau}(Y\vert \bX)=\by_{\tau}=\mathcal{F}_{Y}^{-1}(\tau)=\inf\left\{\by : \mathcal{F}_{Y}\left(\by\right)\ge\tau\right\}\equiv \bx_i'\bbeta_\tau 
\end{equation}
where $\by_{\tau}$ is the inverse function of $\mathcal{F}_{Y}\left(\tau\right)$ for the $\tau$th quantile, $\by=\left(y_{1},y_{2},\dots,y_{n}\right)$, $\bX=(\bx_1,\dots,\bx_n)'$ and $\bx_i=\left(
x_{i1},x_{i2},\dots,x_{ip}\right)'$. In other words, the $\tau^{th}$ quantile in a sample corresponds to the probability $\tau$ for a $\by$ value. Also an estimation of the full model (FM) $\tau^{th}$  quantile regression coefficients can be defined by solving the following minimization of problem 
\begin{equation}
\hbbeta_{\tau} = \underset{\bbeta\in\Re^{p}}{\arg \min} \sum_{i=1}^{n}\rho_\tau (y_i-\bx'_{i}\bbeta),
\end{equation}
where $\rho_\tau(u)=u(\tau-I(u<0))$ is the quantile loss function. Hence, it yields
\begin{equation}\label{eq:est_tau}
\hbbeta_{\tau}=\underset{\bbeta\in\Re^{p}}{\arg \min}\Bigg[ \sum_{i\in\left\{i: y_i\ge \bx_i^{'}\bbeta\right\}}^{n}
\tau\vert\ y_i-\bx'_{i}\bbeta\vert - \sum_{i\in\left\{i: y_i\ <\bx'_i\bbeta\right\}}^{n}
(1-\tau)\vert\ y_i-\bx'_{i}\bbeta\vert\Bigg].
\end{equation}

\citet{Ko-Xi2006} proposed a quantile autoregression (QAR) model which could be interpreted as a special case of the general random-coefficient autoregression model with strongly dependent coefficients. The authors studied statistical properties of the proposed model and associated estimators and derived the limiting distributions of the autoregression quantile process. \cite{Koenker2008} proposed the \textit{quantreg} R package and it is implementations for linear, non-linear and non-parametric quantile regression models.  The R software and the package \textit{quantreg} are open-source software projects and can be freely downloaded from CRAN: \url{http://cran.r-project.org}. \cite{ciric2012} compare different computational intelligence methodologies based on artificial neural networks used for forecasting an air quality parameter. \citet{tang2015} proposed composite quantile regression for dependent data. The authors also showed the root-n consistency and asymptotic normality of the composite quantile estimator. Moreover, the authors apply their proposed method to NO$_2$ particle data in which air pollution on a road is modeled via traffic volume and meteorological variables. \citet{wang-lin2015} proposed a penalized quantile estimator in semiparametric linear regression model and dealt with longitudinal data. The authors obtained the oracle properties of the estimator and selection consistency.

The books by Koenker (\cite{Koenker2005}) and Davino (\cite{Davino2014}) are excellent sources for various properties of Quantile Regression as well as many computer algorithms. Moreover, \cite{hqreg} developed an algorithm, called semismooth Newton coordinate descent (SNCD), to obtain a better efficiency and scalability for computing the solution paths of penalized quantile regression. They also provide an R package called \textit{hqreg}. Moreover, this package also obtains Lasso of (\citet{lasso}), Ridge of (\citet{Ho-Ke1970}) and Elastic Net of (\citet{elastic-net}) estimators in the quantile regression models. The \textit{hqreg} functions give the solution path while the  \textit{quantreg} package of \citet{Koenker2013} computes a single solution.

On the other hand, the book of \cite{ahmed2014} can be found the large literature and informations about shrinkage estimations in the context of linear and partially linear models (PLMs). The preliminary and Stein-type estimations based on ridge regression are obtained by \cite{yuzbasi-et-al2017} for linear models and by \cite{yuzbasi-ahmed2016} for PLMs. Furthermore, \cite{yuzbasi-et-al2017b,yuzbasi-et-al2017c} introduced the pretest and shrinkage estimation based on the quantile regression when the errors are both i.i.d. and non-i.i.d, respectively. In these studies, asymptotic distributional bias, quadratic bias and risk functions are also obtained. The novelty of this study is the errors having the problem of autocorrelation which is very common in time series analysis.

The paper is organized as follows: in Section~\ref{sec:ME}, we consider a real data example in order to examine the assumptions of the classical linear regression. The pretest, shrinkage estimators and penalized estimations are also given in Section \ref{sec:SM}. Also, we estimate the listed estimators in Section \ref{sec:MEC}. The design and the results of a Monte Carlo simulation study including a comparison with other penalty estimators are given in Section ~\ref{sim}. The asymptotic distributional risk properties of the pretest and shrinkage estimators are obtained in Section~\ref{sec:TR}. Finally,  the concluding remarks are presented in Section ~\ref{conc}.

\section{Motivation Example}
\label{sec:ME}

In this section, we consider the study of \cite{Shumway1998} of the possible effects of temperature and pollution on weekly mortality in Los Angeles (LA) Country. This data has $508$ weekly observations from $1970$ to $1979$. In Table~\ref{Tab:variables:lap}, we describe the variables of the cement data which is freely available in the {\it astsa} package with the function {\it lap} in R project.

\begin{table}[!htbp]
\centering
\begin{tabular}{ll}
\toprule
\textbf{Variables} & \textbf{Descriptions} \\
\midrule
\textbf{Response Variable} &\\
rmort             & Respiratory Mortality  \\
\midrule
\textbf{Predictors}   \\
tempr            & Temperature \\
rh               & Relative Humidity \\
co           & Carbon Monoxide \\
so2             & Sulfur Dioxide \\
no2 & Nitrogen Dioxide \\
hycarb & Hydrocarbons \\
o3   & Ozone \\
part             & Particulates \\
\bottomrule
\end{tabular}
\caption{Descriptions of variables for the LA Pollution-Mortality data set
\label{Tab:variables:lap}}
\end{table}

The Figure \ref{fig:res} shows that the observations $152,153$ and $155$ may be outliers. Applying {\it outlierTest} function in the {\it car} package in R, according to the results, we observe that the observations $152-155$ and $260$ are outliers. We also observe that the errors follow a heavy-tailed distribution.

\begin{figure}[!htbp]
\centering
   \includegraphics[height=8cm,width=12cm]{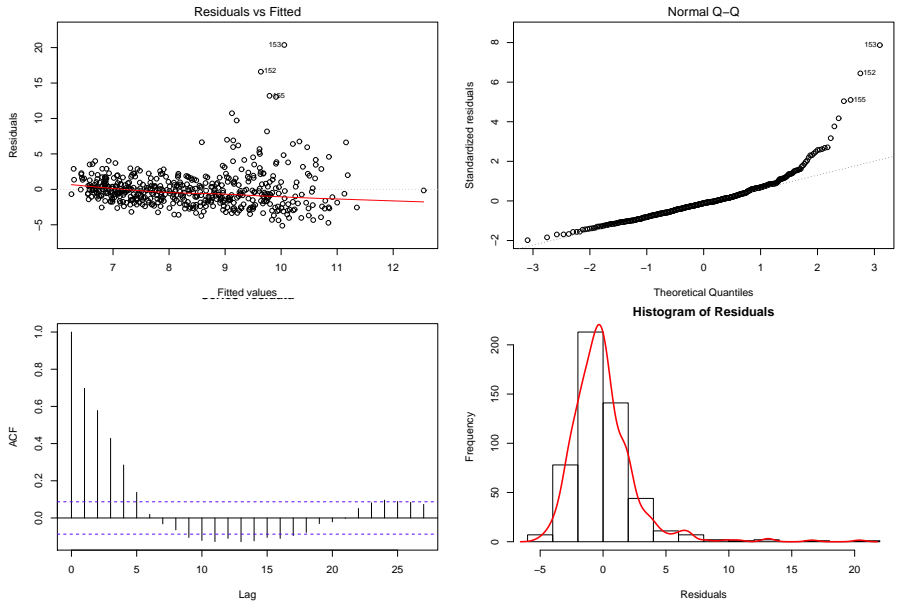}
   \caption{Residual diagnostics}
   \label{fig:res} 
\end{figure}

\begin{table}[!htbp]
\centering
\begin{tabular}{cccc}
\toprule
lag & Autocorrelation& D-W Statistic& p-value \\
\midrule
   1   &   0.697   &   0.604  & 0.000 \\
   2   &   0.578   &   0.841  & 0.000\\
   3   &   0.428   &   1.140  & 0.000\\
   4   &   0.285   &   1.427  & 0.000\\
   5   &   0.138   &   1.719  & 0.006\\
   6   &   0.019   &   1.955  & 0.714\\
\bottomrule
\end{tabular}
\caption{Durbin Watson test
\label{Tab:variables:DW}}
\end{table}

According to Figure \ref{fig:res} and Table \ref{Tab:variables:DW}, the residuals of this data have AR(5) process. Also, we consider the values of $d_L$ and $d_U$ as $1.686$ and $1.852$ respectively. Hence, there is a positive autocorrelation problem on this data.

\begin{table}[!htbp]
\centering
\begin{tabular}{rrrrrrrrr}
  \toprule
 & tempr & rh & co & so2 & no2 & hycarb & o3 & part \\ 
  \midrule
VIF  & 5.197 & 1.673 & 7.711 & 2.636 & 7.377 & 6.071 & 5.698 & 5.360 \\ 
   \bottomrule
\end{tabular}
\caption{VIF values}
\label{TAb:vif}
\end{table}

The values of Table \ref{TAb:vif} and  the ratio of largest eigenvalue to smallest eigenvalue of design matrix in model \eqref{lin.mod} is approximately 657.177 which shows that there is a strong multicollinearity between independent variables. When we consider all results, this dataset suffers from the problems of multicollinearity, autocorrelation, being heavy tailed and having outliers simultaneously. Hence, we will use the quantile type estimation for this data.

\section{Statistical Model}
\label{sec:SM}

Linear regression model in $\eqref{lin.mod}$ would be written in a partitioned form as follows
\begin{equation}
y_i=\bx_{1i}'\bbeta_1+\bx_{2i}'\bbeta_2+\varepsilon _{i},\ \ \
i=1,2,\dots,n,  \label{part.lin}
\end{equation}
where $\bbeta = \left(\bbeta'_1,\bbeta'_2\right)'$ is partitioned so that the coefficient vector of $\bbeta_1=\left( \beta _{1},\beta _{2},\dots,\beta _{p_1}\right)'$, of order $p_1$, is our main interest  and the coefficient vector of $\bbeta_2=\left( \beta _{p_1+1},\beta _{p_1+2},\dots,\beta _{p}\right)'$ is the ``irrelevant variables" with dimension $p_2$, where $p=p_1+p_2$. Also, $\bx_i=\left(\bx_{1i}',\bx_{2i}'\right)$ and $\varepsilon _{i}$ are errors with the same joint distribution function $\mathcal{F}$. The conditional quantile function of response variable $y_i$  can be written as follows
\begin{equation}\label{eq:x_tau}
\mathcal{Q}_{\tau}(y_i\vert \bx_i)=\bx_{1i}'\bbeta_{1,\tau}+\bx_{2i}'\bbeta_{2,\tau},\ \ \ 0<\tau< 1 
\end{equation}
In this study, the main interest is to improve the performance of the important covariates under the following the null hypothesis
\begin{equation}
\label{NullHp}
H_0:\bbeta_{2,\tau}= \bold{0}_{p_2}.
\end{equation}%
If the Equation \eqref{NullHp} is true, then the sub-model (SM) quantile regression estimator of $\bbeta_{\tau}$ is given by $\tbbeta_{\tau}=\left(\tbbeta_{1,\tau},\bm0_{p_2}\right)$, where $
\tbbeta_{1,\tau}=\underset{\bbeta_1\in\Re^{p_1}}{\min} \sum_{i=1}^{n}\rho_\tau (y_i-\bx_{1i}'\bbeta_1)$.

The distribution function ${\mathcal{F}_i}$ is absolutely continuous, with continuous densities $f_i(\xi)$ uniformly bounded away from $0$ and $\infty$ at the points $\xi_i(\tau)$, $i=1,2,...$
\begin{enumerate}
\item[(i)] $\lim_{n\rightarrow\infty} \frac{1}{n}\sum_{i=1}^n\ \bx_i\bx_i'=\bD,\ \ \
\bD_{0}=\frac{1}{n}\bX'\bX$
\item[(ii)] $\lim_{n\rightarrow\infty} \frac{1}{n}\sum_{i=1}^n\ f_i(\bxi_i(\tau))\bx_i\bx_i'=\bD_1$
\item[(iii)] $\max_{1\leq i\leq n}\Vert{\bx_i}\Vert/\sqrt{n}\rightarrow\ 0$
\end{enumerate}
where $\bD_0$ and $\bD_1$ are positive definite matrices.

\subsection{Pretest and Stein-Type Estimations}
The pretest was firstly applied by \citet{Bancroft1944} for the validity of the unclear preliminary information $(\rm UPI)$ by subjecting it to a preliminary test. The pretest estimator $\left(\rm PT\right)$ could be obtained by following equation
\begin{equation}\label{eq:pre}
\hbbeta_{\tau}^{\rm PT}=\hbbeta_{\tau}-\left(\hbbeta_{\tau}-\tbbeta_{\tau}\right)\textrm{I}\left(\mathcal{W}<c_{n,\alpha}\right)
\end{equation}
where $\textrm{I}\left(\cdot\right)$ is an indicator function of a set and $c_{n,\alpha}$ is the $100\left(1-\alpha\right)$ percentage point of the $\mathcal{W}$. In order to test \eqref{NullHp}, under the above assumptions, we consider the following Wald test statistics
\begin{equation}\label{wald}
\mathcal{W}=nw^{-2}\hbbeta_{2,\tau}'\bGamma_{22.1}\hbbeta_{2,\tau}
\end{equation}
where $\bGamma=
\left(
\begin{array}{cc}
\bGamma_{11} & \bGamma_{12} \\
\bGamma_{21} & \bGamma_{22}%
\end{array}%
\right)
=\bD^{-1}\bA\bD^{-1}$, $\bA=\lim_{n\rightarrow\infty}\frac{1}{n}\sum\limits_{i}\sum\limits_{j}\psi{(e_i)}\psi{(e_j)}\bx_i\bx_j^{'}$, the median $\psi{(e_i)}=\sgn{(e_i)}$ and $\bGamma_{22.1} = \bGamma_{22}-\bGamma_{21}\bGamma_{11}^{-1}\bGamma_{12}$. Under the null hypothesis, the distribution of $\mathcal{W}$  follows the chi-square distribution with $p_2$ degree of freedom.

The Stein-type shrinkage (S) estimator is a combination of the over--fitted model estimator $\hbbeta_{\tau}$ with the under--fitted estimator $\tbbeta_{\tau}$, given by
\begin{equation*}
\bm{\widehat{\beta}}_{\tau}^{\textrm{S}}=\hbbeta_{\tau}-d\left( \hbbeta_{\tau}-\tbbeta_{\tau}\right)\mathcal{W}_{n}^{-1} \text{, } d=(p_2-2)\geq 3,
\end{equation*}

In an effort to avoid the over-shrinking problem inherited by $\bm{\widehat{\beta}}_{\tau}^{\textrm{S}}$, we suggest using the positive part of the shrinkage (PS) estimator defined by
\begin{eqnarray*}
\hbbeta_{\tau}^{\textrm{PS}}
&=&\bm{\widehat{\beta}}_{\tau}^{\textrm{S}}-\left( \hbbeta_{\tau}-\tbbeta_{\tau}\right)
\left( 1-d\mathcal{W}_{n}^{-1}\right)\textrm{I}\left(\mathcal{W}_{n}\leq  d\right).
\end{eqnarray*}

\subsection{Quantile Penalized Estimation} We briefly mention about the penalized estimators, given by \cite{hqreg} in quantile regression in a general form as follows:
\begin{equation}
\hbbeta^{\rm Penalized}_{\tau}=\underset{\bm{\bm{\beta}}}{\arg \min}\sum_{i}\rho(y_i-\bx_{i}'\bbeta)+\lambda\ P(\bbeta)
\end{equation}
where $\rho$ is a quantile loss function, $P$ is a penalty function and $\lambda$ is a tuning parameter. Also, 
\begin{equation*}
P(\bbeta)\equiv P_{\alpha}(\bbeta)=\alpha\Vert\bbeta\Vert_1+\frac{(1-\alpha)}{2}\Vert\bbeta\Vert_2^2
\end{equation*}
which is the Lasso penalty for $\alpha=1$ (\citet{lasso}), the Ridge penalty for $\alpha=0$ (\citet{Ho-Ke1970}) and the Elastic-net penalty for $0\le \alpha\le {1}$ (\citet{elastic-net}).

\section{Motivation Example Cont.}
\label{sec:MEC}

In order to apply the proposed methods, we use a two step approach as follows:
\begin{enumerate}
\item[Step 1:] A set of covariates are selected based on a suitable model selection technique since the prior information is not available here.
\item[Step 2:] The full and sub-model estimates are combined in such a way that minimizes the quadratic risk.
\end{enumerate}

For Step 1, one may use the model selection criterion such as AIC, BIC or best subset selection. We, however, use BIC. In Table \ref{Tab:fittings}, we show the full and candidate sub-model.

\begin{table}[!htbp]
\centering
\begin{tabular}{lll}
\toprule
\textbf{Models} & \textbf{Formulas} \\
\midrule
Full Model &  rmort = $\beta_0+\beta_1$tempr $+\beta_2$rh$+\beta_3$co$+\beta_4$so2$+\beta_5$no2$+\beta_6$hycarb$+\beta_7$o3 +$\beta_8$part \\
Sub-Model &  rmort = $\beta_0+\beta_1$tempr $+\beta_2$co\\
\bottomrule
\end{tabular}
\caption{The full and candidate sub-model
\label{Tab:fittings}}
\end{table}


\begin{figure}[!htbp]
\centering
   \includegraphics[height=8cm,width=12cm]{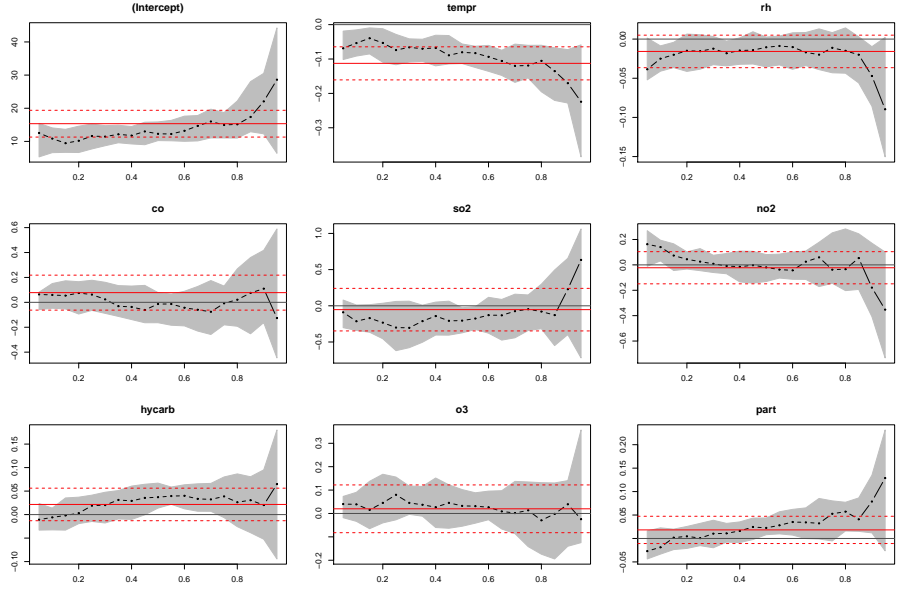}
   \caption{OLS and the full model Quantile Regression Estimates for
    LA Pollution-Mortality data set}
   \label{fig:coef} 
\end{figure}

Figure \ref{fig:coef} presents a summary of the OLS and the FM quantile regression results.
Here, we have 8 covariates, plus an intercept. For each of the 9
coefficients, we plot the 19 distinct quantile regression estimates for $\tau$ ranging from
$0.05$ to $0.95$ as the solid curve with filled dots. For each covariate, these point
estimates may be interpreted as the impact of a one-unit change of the covariate on the response variable
respiratory mortality other covariates fixed. Thus, each of the plots has a horizontal
quantile, or $\tau$, scale, and the vertical axes indicates the covariate effect.
The solid line in each figure shows the OLS estimate of the
conditional mean effect. The two dotted lines represent conventional 90 percent
confidence intervals for the OLS estimate. The shaded gray area depicts a
90 percent point-wise confidence band for the quantile regression estimates.

We will confine our discussion as follows: The intercept estimates seem more dependent on the particular quantile. For example, up to the third quantile, quantile estimates are lower than the OLS while it is larger than the OLS for the upper quantile. With the exception of the coefficients co, hycarb and o3, the quantile regression estimates lie at some point outside the confidence intervals for the OLS regression, suggesting that the effects of these covariates may change across the conditional distribution of the independent variable.

In order to analyze this example, we bootstrap the data using 1000 resamplings. After that, we split the data into two which are training and test data sets. Furthermore, we center the co-variates of training and test data set based on the training data set independently. Finally, we computed the predictive mean absolute deviation (PMAD) criterion which is defined by
\begin{equation*}
\rm PMAD(\hbbeta_{\tau}^{\ast}) = \frac{1}{n_{test}}\sum_{i=1}^{n_{test}}\left | \by_{test}-\bX_{test}\hbbeta_{\tau}^{\ast} \right |.
\end{equation*}

We evaluate the performance of the estimators by averaged cross validation (CV) error using a 5-fold CV. In Table \ref{Tab:PMAD}, we report the performance of the estimator in the sense of PMAD for the real data application. As expected, the SM estimator has the lowest PMAD value for all $\tau$ values. The PS performs better than the Lasso, Elastic-net, FM and OLS, especially in the first and second quantile (median), while the Ridge outperforms all others since the data has highly the problem of multicollinearity. Also, the performance of PT is also well in median.

\begin{table}[!htbp]
\centering
\begin{tabular}{rrrrrrrr}
  \toprule
$\tau$ & FM & SM & PT & PS & Ridge & Lasso & ENET \\ 
  \midrule
0.25 & 2.612 & 2.209 & 2.612 & 2.515 & 2.337 & 2.531 & 2.535 \\ 
  0.5 & 2.391 & 1.881 & 2.359 & 2.249 & 2.220 & 2.372 & 2.372 \\ 
  0.75 & 3.082 & 2.041 & 3.063 & 2.802 & 2.275 & 2.557 & 2.469 \\ 
   \midrule
  & OLS \\
Mean& 2.803\\  
\bottomrule
\end{tabular}
\caption{The PMAD values of the listed estimations}
\label{Tab:PMAD}
\end{table}

\section{Simulation}
\label{sim}
We conduct Monte-Carlo simulation experiments to study the performances of the proposed estimators under various practical settings. In order to generate the response variable, we use
\begin{equation*}
y_i=\bx_i'\bbeta+\varepsilon_i,\ i=1,\dots,n,
\end{equation*}%
where $\bx_i$'s are standard normal. The correlation between the $j$th and $k$th components of $\bX$ equals to $0.5^{|j-k|}$ and also $\varepsilon_i$'s are dependent. 

We consider $\bbeta' = (3, 1.5, 0, 0, 2, 0, 0, 0)$. Also, we simulate data which contains a training dataset, validation set and an independent test set. Note that the co-variates are scaled to have mean zero and unit variance. We fitted the models only using the training data and the tuning parameters were selected using the validation data.  We also use the notation $\cdot/\cdot/\cdot$ to describe the number of observations in the training, validation and test sets respectively. Hence, we consider that the each data set consists of $50/50/200$ observations and $\bX \sim N(0,\boldsymbol{\Sigma})$, where $\Sigma_{ij}=0.5^{|i-j|}$. Furthermore, the errors follow AR(1) process, that is, 
\begin{equation*}
\varepsilon_i = \rho\varepsilon_{i-1}+\omega_t
\end{equation*}
where $|\rho|<1$ is called the ``autocorrelation parameter" and the $\omega_t$ term is a new error term that follows the usual regression assumptions: $\omega_t \sim_{iid} \mathcal{N}(0,1)$.

\begin{table}[!htbp]
\centering
\begin{tabular}{rrcccc}
  \toprule
$\tau$ & & $\rho=-0.2$ & $\rho=0.2$ & $\rho=-0.5$ & $\rho=0.5$ \\ 
  \midrule
0.25& FM & 0.190(0.004) & 0.188(0.004) & 0.206(0.005) & 0.220(0.004) \\ 
  &SM & 0.060(0.002) & 0.057(0.002) & 0.069(0.002) & 0.063(0.002) \\ 
  &PT & 0.078(0.005) & 0.076(0.006) & 0.090(0.007) & 0.083(0.006) \\ 
  &PS & 0.120(0.004) & 0.134(0.005) & 0.143(0.005) & 0.149(0.005) \\ 
  \cmidrule(lr){3-6}
  &Ridge & 0.135(0.003) & 0.139(0.004) & 0.154(0.003) & 0.158(0.003) \\ 
  &Lasso & 0.081(0.002) & 0.076(0.003) & 0.087(0.003) & 0.089(0.002) \\ 
  &ENET & 0.078(0.002) & 0.074(0.003) & 0.085(0.003) & 0.087(0.002) \\ 
     \midrule
0.5&FM & 0.173(0.004) & 0.165(0.003) & 0.199(0.005) & 0.191(0.004) \\ 
  &SM & 0.055(0.002) & 0.053(0.002) & 0.058(0.002) & 0.057(0.002) \\ 
  &PT & 0.059(0.005) & 0.061(0.004) & 0.072(0.006) & 0.069(0.006) \\ 
  &PS & 0.103(0.004) & 0.098(0.004) & 0.126(0.005) & 0.122(0.005) \\ 
  \cmidrule(lr){3-6}
  &Ridge & 0.133(0.003) & 0.124(0.003) & 0.150(0.003) & 0.149(0.003) \\ 
  &Lasso & 0.073(0.002) & 0.073(0.002) & 0.078(0.003) & 0.077(0.003) \\ 
  &ENET & 0.072(0.002) & 0.072(0.002) & 0.078(0.003) & 0.075(0.003) \\ 
    \midrule
0.75&FM & 0.183(0.004) & 0.184(0.004) & 0.217(0.005) & 0.210(0.005) \\ 
  &SM & 0.060(0.002) & 0.059(0.002) & 0.062(0.002) & 0.066(0.002) \\ 
  &PT & 0.082(0.005) & 0.072(0.005) & 0.080(0.006) & 0.090(0.006) \\ 
  &PS & 0.121(0.004) & 0.115(0.004) & 0.146(0.005) & 0.144(0.005) \\ 
  \cmidrule(lr){3-6}
  &Ridge & 0.140(0.003) & 0.139(0.003) & 0.161(0.004) & 0.159(0.004) \\ 
  &Lasso & 0.080(0.002) & 0.074(0.002) & 0.087(0.003) & 0.089(0.003) \\ 
  &ENET & 0.078(0.002) & 0.073(0.002) & 0.085(0.003) & 0.083(0.003) \\ 
   \midrule
Mean&OLS & 0.137(0.009) & 0.136(0.009) & 0.154(0.010) & 0.154(0.010) \\
   \bottomrule
\end{tabular}
\caption{Simulated PMAD values of estimators, and the values in parenthesis present the standard errors of each estimation}
\label{TAb:sim}
\end{table}

Table \ref{TAb:sim} presents an outline summary for the different illustrative models used in the case of autoregressive errors  where  $\rho=\pm5$ characterized by heavier tails while $\rho=\pm2$ corresponds to the median. First, we note that the OLS fails against to quantile-type estimations. As expected, the SM has the lowest the PMAD value since the data is generated from an empirical distribution where the candidate sub-model is nearly true. Furthermore, the pretest and positive shrinkage estimators are superior to the FM estimator. On the other hand, the results indicate that the PT mostly performs better than penalty estimators while positive shrinkage does not have a good performance due to the small value of $p_1$.

\section{Theoretical Results}
\label{sec:TR}
In this section, we demonstrate the asymptotic risk properties of suggested estimators. So, we consider the following theorem. 

\begin{theorem}\label{teo_dist_FM_AR(1)} The distribution of quantile regression model with AR(1) process is given by
\begin{equation}
\sqrt{n}(\hbbeta_\tau-\bbeta)\rightarrow^{\hspace{-0.3cm}D}\mathcal{N}(0,w^2\boldsymbol{\Gamma})
\label{teo_dist_FM}
\end{equation}
\end{theorem}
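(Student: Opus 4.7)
The plan is to use the standard convexity argument of Knight (1998) and Pollard (1991), adapted to dependent errors. Let $\bu = \sqrt{n}(\bbeta - \bbeta_\tau)$ and define the reparameterized, centered objective
\begin{equation*}
Z_n(\bu) = \sum_{i=1}^{n}\Bigl[\rho_\tau\bigl(e_i - \bx_i'\bu/\sqrt{n}\bigr) - \rho_\tau(e_i)\Bigr],
\end{equation*}
where $e_i = y_i - \bx_i'\bbeta_\tau$. Since $Z_n$ is convex in $\bu$ and is minimized at $\hbu = \sqrt{n}(\hbbeta_\tau - \bbeta_\tau)$, by the convexity lemma it is enough to establish the finite-dimensional limit of $Z_n(\bu)$ and then invoke uniform convergence on compact sets.

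First, I would apply Knight's identity $\rho_\tau(x-y) - \rho_\tau(x) = -y\,\psi_\tau(x) + \int_0^y\{I(x\le s) - I(x\le 0)\}\,ds$, with $\psi_\tau(x) = \tau - I(x<0)$, to decompose $Z_n(\bu) = -\bu'\bS_n + R_n(\bu)$, where
\begin{equation*}
\bS_n = \frac{1}{\sqrt{n}}\sum_{i=1}^n \psi_\tau(e_i)\,\bx_i, \qquad
R_n(\bu) = \sum_{i=1}^n \int_0^{\bx_i'\bu/\sqrt{n}}\{I(e_i\le s) - I(e_i\le 0)\}\,ds.
\end{equation*}
For the deterministic-looking remainder, I would take conditional expectation $E[R_n(\bu)\mid\mathcal F_{i-1}]$, Taylor-expand $F_i(s)-F_i(0)\approx f_i(0)s$ uniformly for $|s|\le\|\bx_i\|\,\|\bu\|/\sqrt n$, and use assumption (iii) together with (ii) to conclude that $E R_n(\bu) \to \tfrac{1}{2}\bu'\bD_1\bu$; a variance bound using assumption (iii) shows that $R_n(\bu) - ER_n(\bu) \to 0$ in probability pointwise in $\bu$.

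The main obstacle is the linear term $\bS_n$: under AR(1) errors the summands $\psi_\tau(e_i)\bx_i$ are serially dependent, so an i.i.d.\ CLT is unavailable. I would establish $\bS_n \xrightarrow{D} \mathcal{N}(\bzero, w^{2}\bA)$ by a mixing or martingale-difference CLT applied to the strictly stationary $\alpha$-mixing sequence $\{\psi_\tau(e_i)\}$ (AR(1) processes with absolutely continuous innovations are geometrically $\beta$-mixing). The covariance structure is handled by writing
\begin{equation*}
\Var(\bS_n) = \frac{1}{n}\sum_{i=1}^n\sum_{j=1}^n E[\psi_\tau(e_i)\psi_\tau(e_j)]\,\bx_i\bx_j' \longrightarrow w^{2}\bA,
\end{equation*}
where $w^{2}$ absorbs the scale implicit in $\psi_\tau$, matching the definition of $\bA$ given before Theorem~\ref{teo_dist_FM_AR(1)}.

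Combining the two pieces gives $Z_n(\bu) \xrightarrow{D} Z(\bu) = -\bu'\bZ + \tfrac{1}{2}\bu'\bD_1\bu$ with $\bZ\sim \mathcal N(\bzero, w^2\bA)$; as $Z_n$ is convex and $Z$ has a unique minimizer $\hbu^{*} = \bD_1^{-1}\bZ$, the argmin continuous mapping result (Geyer, 1996; Hjort \& Pollard, 1993) yields $\hbu \xrightarrow{D} \bD_1^{-1}\bZ$. Since $\bD_1^{-1}\bA\bD_1^{-1}=\bGamma$ by the definition given in \eqref{wald}, this is exactly $\sqrt{n}(\hbbeta_\tau - \bbeta_\tau)\xrightarrow{D}\mathcal N(\bzero, w^2\bGamma)$. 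The delicate step, and where I would spend the bulk of the write-up, is verifying the mixing conditions and the Lindeberg-type tail bound needed for the dependent CLT, since everything else reduces to routine Knight/Pollard manipulations.
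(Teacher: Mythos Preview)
Your proposal is considerably more than what the paper itself supplies: the paper's entire ``proof'' is a one-line reference to \cite{Davino2014}, so there is no argument to compare against in any substantive sense.

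The route you take---Knight's identity, the Pollard convexity lemma, and a dependent-data CLT for the score $\bS_n$---is the standard and correct one, and your identification of the mixing CLT as the only nontrivial step is accurate. One point to reconcile with the paper's notation: the paper sets $\bGamma=\bD^{-1}\bA\bD^{-1}$ using the \emph{unweighted} design limit $\bD$ of assumption~(i), whereas your argument (correctly, from the Hessian of the limiting objective) delivers $\bD_1^{-1}\bA\bD_1^{-1}$ with the density-weighted matrix $\bD_1$ of assumption~(ii). These agree only when the $f_i(\xi_i(\tau))$ are constant in $i$, the common factor being absorbed into $w^{2}$; either the paper is tacitly assuming this or its definition of $\bGamma$ has a slip. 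State clearly which convention you adopt. Similarly, since $\bA$ already carries the $\psi_\tau(e_i)\psi_\tau(e_j)$ cross-products, be explicit about what scalar $w^{2}$ is meant to capture so that your limiting covariance matches the statement exactly rather than ``absorbing'' an unspecified scale.
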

\begin{proof}
The proof can be obtained from (\citet{Davino2014})
\end{proof}
Let a sequence of local alternatives $\left\{K_n\right\}$ given by
\begin{equation*}
\ K_n:\boldsymbol{\beta}_{2,\tau}=\frac{\boldsymbol{\gamma}}{\sqrt{n}}
\end{equation*}
where $\boldsymbol{\gamma}=\left(
\gamma_{1},\gamma_{2},\dots,\gamma_{p_{2}}\right)'\in \Re^{p_{2}}$ is a fixed vector. If $\boldsymbol{\gamma}=\bold{0}_{p_2}$, then the null hypothesis is true. Furthermore, we consider the following proposition to establish the
asymptotic properties of the estimators.

\begin{proposition}\label{prop_vector_dist} 
Let $\vartheta _{1} = \sqrt{n}\left(\hbbeta_{1, \tau}-%
\bbeta_{1,\tau}\right)$, $\vartheta _{2} =\sqrt{n}\left( \tbbeta_{1,\tau}-%
\bbeta_{1,\tau}\right)$ and $\vartheta _{3} =\sqrt{n}\left( \hbbeta_{1,\tau}-%
\tbbeta_{1,\tau}\right)$. Under the regularity assumptions (i)--(iii), Theorem~\ref{teo_dist_FM_AR(1)} and the local alternatives $\left\{ K_{n}\right\}$, as $n\rightarrow \infty$ we have the joint distributions are given as follows:

$$\left(
\begin{array}{c}
\vartheta _{1} \\
\vartheta _{3}%
\end{array}%
\right) \sim\mathcal{N}\left[ \left(
\begin{array}{c}
\boldsymbol{0 }_{p_1} \\
-\boldsymbol{\delta }%
\end{array}%
\right) ,\left(
\begin{array}{cc}
\omega^2\bGamma_{11.2}^{-1} & \bSigma_{12} \\
\bSigma_{21} & \boldsymbol{\Phi}%
\end{array}%
\right) \right],$$


$$\left(
\begin{array}{c}
\vartheta _{3} \\
\vartheta _{2}%
\end{array}%
\right) \sim\mathcal{N}\left[ \left(
\begin{array}{c}
-\boldsymbol{\delta } \\
\boldsymbol{\delta }%
\end{array}%
\right) ,\left(
\begin{array}{cc}
\boldsymbol{\Phi} & \bSigma^* \\
\bSigma^* & \omega^2\bGamma_{11}^{-1}%
\end{array}%
\right) \right],$$ \\
where $\bdelta= \bGamma_{11}^{-1}\bGamma_{12}\boldsymbol{\gamma}$,  $\boldsymbol{\Phi}=\bGamma_{11}^{-1}\bGamma_{12}\bGamma_{22.1}^{-1}
\bGamma_{21}\bGamma_{11}^{-1}$, $\bSigma_{12}=-\bGamma_{12}\bGamma_{21}\bGamma_{11}^{-1}$ and $\bSigma^*=\bSigma_{21}+\omega^2\bGamma_{11.2}^{-1}$.
\end{proposition}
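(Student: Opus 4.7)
The plan is to derive the three joint asymptotic distributions from a common Bahadur-type stochastic expansion of the full-model and sub-model quantile estimating equations. Under the local alternatives $K_n$, the true parameter $\bbeta_{2,\tau}=\boldsymbol{\gamma}/\sqrt{n}$ drifts to the null value at rate $1/\sqrt{n}$, which is precisely the rate producing a constant (non-vanishing, non-diverging) asymptotic bias when the sub-model is fitted. The argument combines Theorem~\ref{teo_dist_FM_AR(1)} with a contiguity-type argument and then reads off the desired joint Gaussian limits from a single linear representation of both estimators.

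First, I would write a Bahadur representation for the full-model estimator of the form $\sqrt{n}(\hbbeta_\tau-\bbeta_\tau)=\bGamma\,\tfrac{1}{\sqrt{n}}\sum_{i=1}^n \bx_i\,\psi_\tau(\varepsilon_i)+o_p(1)$, with $\psi_\tau(u)=\tau-I(u<0)$. Under $K_n$, the $O(1/\sqrt{n})$ perturbation of $\bbeta_{2,\tau}$ induces an asymptotically negligible shift in the score, and Le Cam's third lemma (or an equivalent direct stochastic expansion) together with Theorem~\ref{teo_dist_FM_AR(1)} shows the right-hand side is centered Gaussian with covariance $\omega^2\bGamma$. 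Extracting the first $p_1$ coordinates and applying the block-inversion (Schur-complement) identity gives $\vartheta_1\sim\mathcal{N}(\bzero_{p_1},\omega^2\bGamma_{11.2}^{-1})$.

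Next, for the sub-model, the criterion $\sum_i\rho_\tau(y_i-\bx_{1i}'\bbeta_1)$ is misspecified under $K_n$: the effective shift in the argument is $\bx_{2i}'\boldsymbol{\gamma}/\sqrt{n}$. A convex M-estimator linearization in the quantile framework yields $\sqrt{n}(\tbbeta_{1,\tau}-\bbeta_{1,\tau})=\bGamma_{11}^{-1}\,\tfrac{1}{\sqrt{n}}\sum_i \bx_{1i}\psi_\tau\!\left(\varepsilon_i+\bx_{2i}'\boldsymbol{\gamma}/\sqrt{n}\right)+o_p(1)$. Expanding the conditional mean of the perturbed score through the density factor in assumption (ii) contributes the deterministic drift $\bGamma_{11}^{-1}\bGamma_{12}\boldsymbol{\gamma}=\bdelta$, while the stochastic component has the same Gaussian limit as under $\boldsymbol{\gamma}=\bzero$, with variance $\omega^2\bGamma_{11}^{-1}$. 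Hence $\vartheta_2\sim\mathcal{N}(\bdelta,\omega^2\bGamma_{11}^{-1})$. Since $\vartheta_3=\vartheta_1-\vartheta_2$ is a linear functional of the same joint Gaussian limit, its asymptotic mean is $-\bdelta$ and its covariance is $\boldsymbol{\Phi}$; the off-diagonal blocks $\bSigma_{12}$ and $\bSigma^*$ then follow by bilinearity of covariance together with the Schur-complement identities relating $\bGamma_{11}^{-1}$, $\bGamma_{22.1}^{-1}$ and $\bGamma_{11.2}^{-1}$. The two joint Gaussian statements in the proposition are immediate from the joint linearization of $(\hbbeta_\tau,\tbbeta_{1,\tau})$.

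The principal technical obstacle is the linearization in Step~2: because $\rho_\tau$ is non-smooth and the errors are serially dependent (AR(1) under the theorem's setting), the stochastic equicontinuity needed to replace the perturbed score by its first-order expansion is not automatic. Either a Le Cam contiguity argument combined with a mixing/$L^2$-differentiability hypothesis on $\mathcal{F}_i$, or a direct convex-lemma approach paired with a CLT for martingale-difference approximations of the AR(1) score process, will be required to legitimise the expansion of $\psi_\tau(\varepsilon_i+\bx_{2i}'\boldsymbol{\gamma}/\sqrt{n})$ and to handle the correlation structure of the summands. Once this is in place, the remaining bookkeeping—assembling mean vectors, computing $\Cov(\vartheta_1,\vartheta_2)$ from the joint influence-function representation, and verifying the block-inverse identities that produce $\bGamma_{11.2}^{-1}$, $\boldsymbol{\Phi}$, $\bSigma_{12}$ and $\bSigma^*$—is routine.
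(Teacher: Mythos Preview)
The paper does not actually supply a proof of this proposition: after stating it, the text simply moves on with ``Now, we are ready to obtain the asymptotic distributional risks of estimators\ldots'' and the only justification given anywhere in Section~6 is the one-line reference to \citet{Davino2014} attached to Theorem~6.1. In effect the authors treat Proposition~6.2 as a quotable fact from the shrinkage/quantile literature (cf.\ \citet{ahmed2014}, \citet{yuzbasi-et-al2017b,yuzbasi-et-al2017c}) rather than something to be derived from scratch.

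Your proposal therefore goes well beyond what the paper offers, and the route you outline---a Bahadur representation for $\hbbeta_\tau$, a convex-lemma linearization of the misspecified sub-model criterion under $K_n$ to extract the drift $\bdelta=\bGamma_{11}^{-1}\bGamma_{12}\boldsymbol{\gamma}$, and then reading off the blocks $\bGamma_{11.2}^{-1}$, $\bGamma_{11}^{-1}$, $\boldsymbol{\Phi}$, $\bSigma_{12}$, $\bSigma^*$ via Schur-complement identities and bilinearity of covariance---is exactly the standard machinery used for results of this type. Your identification of the real work, namely the stochastic-equicontinuity step for the non-smooth $\psi_\tau$ with AR(1) errors, is accurate; neither Theorem~6.1 nor assumptions (i)--(iii) give this to you for free, so either a contiguity argument or a direct martingale/mixing CLT for the score process is indeed required. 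Once that is in hand, the remaining block-matrix bookkeeping is routine, as you say. In short: there is no paper proof to compare against, and your sketch is the correct and conventional way to establish the proposition.
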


Now, we are ready to obtain the asymptotic distributional risks of estimators which are given the following section.

\subsection{The performance of Risk}
The asymptotic distributional risk of an estimator $\hbbeta_{1, \tau}^{\ast }$ is defined as
\begin{equation*}
\mathcal{R}\left( \hbbeta_{1,\tau}^{\ast} \right) = {\rm tr}\left( \boldsymbol{W\Gamma} \right) 
\end{equation*}
where $\boldsymbol{W}$ is a positive definite matrix of weights with dimensions of $p\times p$, and $\boldsymbol{\Gamma}$ is the asymptotic covariance matrix of an estimator $\hbbeta_{1, \tau}^{\ast }$ is defined as
\begin{equation*}
\boldsymbol{\Gamma} \left( \hbbeta_{1,\tau}^{\ast} \right) = \mathbb{E}\left\{\underset{n\rightarrow \infty }{\lim }{n}\left( \hbbeta_{1,\tau}^{\ast} -\bbeta_{1,\tau}\right)\left( \hbbeta_{1,\tau}^{\ast}-\bbeta_{1,\tau}\right) ^{'}\right\}.
\end{equation*}

\begin{theorem}\label{risks}
Under the assumed regularity conditions in (i) and (ii), the Proposition \ref{prop_vector_dist},  the Theorem~\ref{teo_dist_FM}  and $\left\{K_n\right\}$, the expressions for asymptotic risks for listed estimators are:
\begin{eqnarray*}
\mathcal{R}\left( \hbbeta_{1,\tau}\right)
&=&\omega^2{\rm tr}\left( \bW \bGamma_{11.2}^{-1}\right)\\
\mathcal{R}\left( \tbbeta_{1,\tau}\right)
&=&\omega^2 {\rm tr}\left( \bW \bGamma_{11}^{-1}\right)
+{\rm tr}\left(\bW\bM\right), where \left(\bM=\bGamma_{11}^{-1}\bGamma_{12}\boldsymbol{\gamma}\boldsymbol{\gamma}^{'}\bGamma_{21}\bGamma_{11}^{-1}=\bdelta\bdelta^{'}\right)\\
\mathcal{R}\left( \hbbeta_{1,\tau}^{\rm PT}\right) 
&=&\mathcal{R}\left(
\hbbeta_{1,\tau}\right)+\omega^2{\rm tr}\left(\bW\bGamma_{11}^{-1}\bGamma_{12}\bGamma_{22.1}^{-1}\bGamma_{21}\bGamma_{11}^{-1}\right)+{\rm tr}\left(\bdelta\bW\bdelta^{'}\right)\mathbb{H}_{d+4}\left( \chi_{d+4}^{2};\Delta \right) \\
&&+\bW\boldsymbol{\Phi}\mathbb{H}_{d+4}\left(\chi _{d+2,\alpha }^{2}\left(\Delta \right)\right)+{\rm tr}\left(\bdelta\bW\bdelta^{'}\right)\mathbb{H}_{d+6}\left( \chi_{d+2,\alpha }^{2}\left(\Delta\right) \right)  \cr     
\mathcal{R}\left( \hbbeta_{1,\tau}^{\rm S}\right) &=&\mathcal{R}\left(
\hbbeta_{1,\tau}\right) -2d\mathbb{E}\left\{ \chi _{d+4
}^{-2}\left( \Delta \right) \right\}{\rm tr}\left(\bW\bSigma_{21}\right)  \cr
&&-2d \mathbb{E}\left\{\chi _{d+6
}^{-2}\left( \Delta\right)\right\}{\rm tr}\left(\bW\bdelta\bdelta^{'}\bSigma^{*-1}\bSigma_{21}\right) \cr
&&+2d \mathbb{E}\left\{\chi _{d+4
}^{-2}\left( \Delta\right)\right\}{\rm tr}\left(\bW\bdelta\bdelta^{'}\bSigma^{*-1}\bSigma_{21}\right) \cr
&&+d^2\mathbb{E}\left\{\chi _{d+4
}^{-4}\left( \Delta\right)\right\}{\rm tr}\left(\bW\bSigma^{*}\right) \cr
&&+d^2\mathbb{E}\left\{\chi _{d+6
}^{-2}\left( \Delta\right)\right\}{\rm tr}\left(\bW\bdelta\bdelta^{'}\right)\cr
\mathcal{R}\left( \hbbeta_{\tau}^{\rm PS}\right)
&=&\mathcal{R}\left(
\hbbeta_{\tau}^{\rm S}\right)-2\mathbb{E}\left(1-d\chi _{d+4 }^{-2}\left( \Delta \right)\right)\textrm{I}\left( \chi _{d+4 }^{2}\left( \Delta \right)<d\right) {\rm tr}\left(\bW\bSigma_{21}\right) \cr
&&-2\mathbb{E}\left(1-d\chi _{d+6 }^{-2}\left( \Delta \right)\right)\textrm{I}\left(\chi_{d+6}^{2}\left( \Delta \right)<d\right){\rm tr}\left(\bW\bdelta^{'}\bdelta\bSigma^{*-1}\bSigma_{21}\bSigma^{*-1}\bSigma_{21}\right) \cr
&&+2\mathbb{E}\left(1-d\chi _{d+4 }^{-2}\left( \Delta \right)\right)\textrm{I}\left(\chi_{d+6}^{2}\left( \Delta \right)<d\right){\rm tr}\left(\bW\bdelta\bdelta^{'}\right) \cr
&&+\mathbb{E}\left(1-d\chi _{d+4 }^{-2}\left( \Delta \right)\right)^2\textrm{I}\left(\chi_{d+4}^{2}\left( \Delta \right)<d\right){\rm tr}\left(\bW\bSigma^{*}\right) \cr
&&+\mathbb{E}\left(1-d\chi _{d+6 }^{-2}\left( \Delta \right)\right)^2\textrm{I}\left(\chi_{d+6}^{2}\left( \Delta \right)<d\right){\rm tr}\left(\bW\bdelta\bdelta^{'}\right) 
\end{eqnarray*}%
\end{theorem}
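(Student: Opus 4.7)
The plan is to write each estimator's scaled centering $\sqrt n(\hbbeta_{1,\tau}^{\ast}-\bbeta_{1,\tau})$ as an explicit function of the three random vectors $\vartheta_1,\vartheta_2,\vartheta_3$ from Proposition~\ref{prop_vector_dist} together with the Wald statistic $\mathcal W_n$, and then compute the trace-weighted second moment of the resulting limiting random vector. The key fact I would invoke repeatedly is that, under $\{K_n\}$, $\mathcal W_n\stackrel{D}{\longrightarrow}\chi^{2}_{p_2}(\Delta)$ with $\Delta=\omega^{-2}\boldsymbol{\gamma}^{\prime}\bGamma_{22.1}\boldsymbol{\gamma}$, and that $\mathcal W_n$ is asymptotically a quadratic form in $\vartheta_3$ since $\vartheta_3=-\bGamma_{11}^{-1}\bGamma_{12}\sqrt n\hbbeta_{2,\tau}+o_P(1)$.

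The FM and SM risks are immediate from the marginals in Proposition~\ref{prop_vector_dist}: $\vartheta_1\sim\mathcal N(\bold 0,\omega^{2}\bGamma_{11.2}^{-1})$ gives $\mathcal R(\hbbeta_{1,\tau})=\omega^{2}\mathrm{tr}(\bW\bGamma_{11.2}^{-1})$, while $\vartheta_2\sim\mathcal N(\bdelta,\omega^{2}\bGamma_{11}^{-1})$ gives, via $\mathbb E[\vartheta_2\vartheta_2^{\prime}]=\omega^{2}\bGamma_{11}^{-1}+\bdelta\bdelta^{\prime}$, the stated SM risk.

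For the pretest estimator, I would write $\sqrt n(\hbbeta_{1,\tau}^{\rm PT}-\bbeta_{1,\tau})=\vartheta_1-\vartheta_3\,\textrm I(\mathcal W_n<c_{n,\alpha})$ and expand the outer product into four terms $\mathbb E[\vartheta_1\vartheta_1^{\prime}]$, $\mathbb E[\vartheta_1\vartheta_3^{\prime}\textrm I]$, $\mathbb E[\vartheta_3\vartheta_1^{\prime}\textrm I]$, $\mathbb E[\vartheta_3\vartheta_3^{\prime}\textrm I]$. The last of these is $(\boldsymbol\Phi+\bdelta\bdelta^{\prime})\mathbb H_{p_2+2}(\chi^{2}_{p_2,\alpha};\Delta)+\bdelta\bdelta^{\prime}\mathbb H_{p_2+4}(\chi^{2}_{p_2,\alpha};\Delta)$ by the standard identity $\mathbb E[\vartheta_3 g(\vartheta_3^{\prime}A\vartheta_3)]=-\bdelta\,\mathbb E[g(\chi^{2}_{p_2+2}(\Delta))]$ and its outer-product analogue (Judge--Bock). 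The cross terms are handled by conditioning $\vartheta_1\mid\vartheta_3$, which is normal with conditional mean linear in $\vartheta_3+\bdelta$ and coefficient $\bSigma_{12}\boldsymbol\Phi^{-1}$; because $\mathcal W_n$ depends asymptotically only on $\vartheta_3$, the conditional expectations collapse to the same chi-square probabilities. Matching the resulting traces against $\bW$ yields the displayed formula.

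For the Stein and positive-part estimators the pattern is identical: substitute $\sqrt n(\hbbeta_{1,\tau}^{\rm S}-\bbeta_{1,\tau})=\vartheta_1-d\vartheta_3\mathcal W_n^{-1}$ (and for PS, subtract the additional correction $\vartheta_3(1-d\mathcal W_n^{-1})\textrm I(\mathcal W_n\le d)$), expand the outer product, and on each term use the replacement rule $\mathbb E[\vartheta_3\,h(\mathcal W_n)]\to -\bdelta\,\mathbb E[h(\chi^{2}_{p_2+2}(\Delta))]$ and $\mathbb E[\vartheta_3\vartheta_3^{\prime}h(\mathcal W_n)]\to\boldsymbol\Phi\,\mathbb E[h(\chi^{2}_{p_2+2}(\Delta))]+\bdelta\bdelta^{\prime}\mathbb E[h(\chi^{2}_{p_2+4}(\Delta))]$, together with conditioning for the $\vartheta_1\vartheta_3^{\prime}$ cross terms, which introduces the $\bSigma_{21}$ and $\bSigma^{\ast}$ factors appearing in the statement. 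Taking $\mathrm{tr}(\bW\cdot)$ of everything recovers the S and PS risks.

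I expect the main obstacle to be bookkeeping rather than new ideas: the number of cross terms involving products of $\vartheta_1,\vartheta_3,\mathcal W_n^{-1},\mathcal W_n^{-2}$ and indicators grows quickly, and one has to be disciplined about using the conditional distribution $\vartheta_1\mid\vartheta_3$ to reduce each to a chi-square expectation of appropriate shifted degrees of freedom. A secondary subtlety is the justification for replacing $\mathcal W_n$ by its limit inside these expectations, which requires uniform integrability of the various ratios $\vartheta_3/\mathcal W_n^{k}$; this can be supplied by the argument in the shrinkage literature (e.g.\ \cite{ahmed2014}) once Proposition~\ref{prop_vector_dist} is in hand.
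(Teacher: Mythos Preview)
The paper does not actually supply a proof of this theorem; Theorem~\ref{risks} is stated and then followed directly by the qualitative risk comparison paragraph, with no derivation in between. So there is no ``paper's own proof'' to compare against.

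That said, your proposal is the standard argument for results of this type (as in \cite{ahmed2014} and the Judge--Bock/Saleh tradition the paper is drawing on): write each estimator's scaled error as a function of $\vartheta_1,\vartheta_3$ and $\mathcal W_n$, use the joint normality of Proposition~\ref{prop_vector_dist}, condition $\vartheta_1\mid\vartheta_3$ for cross terms, and evaluate the remaining expectations via the noncentral chi-square identities $\mathbb E[\vartheta_3\,h(\mathcal W)]$ and $\mathbb E[\vartheta_3\vartheta_3^{\prime}h(\mathcal W)]$. This is exactly the machinery the authors' cited companion papers \cite{yuzbasi-et-al2017b,yuzbasi-et-al2017c} use, so your route is the intended one even if the present paper omits it. Your caveats about uniform integrability and the bookkeeping burden are apt; nothing further is needed conceptually.
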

Noting that if $\bGamma_{12}= \boldsymbol{0}$, then all the risks reduce to common value  $\omega^2{\rm tr}\left( \bW \bGamma_{11}^{-1}\right)$ for all $\bW$. For  $\bGamma_{12}\neq 0$, the risk of  $\hbbeta_{1,\tau}$ remains constant while the risk of $\tbbeta_{1,\tau}$ is an bounded function of $\Delta$ since $\Delta \in [0,\infty]$. The risk of $\hbbeta_{1,\tau}^{\rm PT}$ increases as $\Delta$ moves away from zero, achieves it maximum and then decreases towards the risk of the full model estimator. Thus, it is a bounded function of $\Delta$. The risk of $\hbbeta_{1,\tau}$ is smaller than the risk of $\hbbeta_{1,\tau}^{\rm PT}$ for some small values of $\Delta$ and opposite conclusions holds for rest of the parameter space. It can be seen that 
\begin{equation*}
\mathcal{R}\left(\hbbeta_{1,\tau}^{\rm PS}\right)\leq{\mathcal{R}\left(
\hbbeta_{1,\tau}^{\rm S}\right)}\leq{\mathcal{R}\left(\hbbeta_{1,\tau}\right)},
\end{equation*}
strictly inequality holds for small values of $\Delta$. Thus, positive shrinkage is superior to the shrinkage estimator. However, both shrinkage estimators outperform the full model estimator in the entire parameter space induced by $\Delta$. On the other hand, the pretest estimator performs better than the shrinkage estimators when $\Delta$ takes small values and outside this interval the opposite conclusion holds.

\section{Conclusions}
\label{conc}

In this paper, we obtained pretest and stein-type shrinkage estimations based on quantile regression when the distribution of errors have the problem of autocorrelation. Also, we investigated the performance of the listed estimators in a real world example using the data analyzed by \cite{Shumway1998} such that the effects of air pollution and temperature on weekly mortality in LA are considered. The results showed that the quantile type estimators outperform the OLS. Not surprisingly, the SM estimator has the lowest PMAD since the candidate sub-model is assumed as true. Furthermore, the PT and PS perform better than the FM. Also, the performance of the proposed estimators are mostly superior to penalty estimators, moreover Ridge has a better performance since the data has the multicollinearity problem. On the other hand, we conducted a Monte Carlo simulation study in order to investigate the performance of the suggested estimators. The results of simulation study coincide with the results of real data example. Finally, we demonstrated the asymptotic distributional risk performance of the listed estimators. Our asymptotic theory is well supported by numerical analysis.


\end{document}